\documentclass[11pt]{article}
\usepackage{fullpage}
\usepackage{amssymb}
\usepackage{times}
\usepackage{amsmath,amsthm}
\usepackage{enumitem}
\usepackage{graphicx}
\usepackage{xcolor}
\usepackage{url}
\usepackage{cite}
\usepackage{microtype}
\usepackage{hyperref}
\usepackage{mathtools}
\usepackage[linesnumbered,ruled,vlined]{algorithm2e}
\usepackage[framemethod=tikz]{mdframed}

\bibliographystyle{plain}

\newtheorem{theorem}{Theorem}[section]

\newtheorem{lemma}[theorem]{Lemma}
\newtheorem{corollary}[theorem]{Corollary}

\newcommand{\epl} {\epsilon_{\scriptscriptstyle\mathsf{L}}}
\newcommand{\epu} {\epsilon_{\scriptscriptstyle\mathsf{U}}}
\newcommand{\epub} {\bar{\epsilon}_{\scriptscriptstyle\mathsf{U}}}

\title{Local Dominance in Mixed-Strength Populations \\ Fast Maximal Independent Set}

\author{
Michael Luby\thanks{BitRipple}
\and
Sandy Irani\thanks{Department of Computer Science, University of California, Irvine}
}

\date{} 

\begin{document}
\maketitle

\begin{abstract}
In many natural and engineered systems, agents interact through repeated local contests that determine which individuals become dominant within their neighborhoods. These interactions are shaped by inherent differences in strength, and they often lead to stable dominance patterns that emerge surprisingly quickly relative to the size of the population. This motivates the search for simple mathematical models that capture both heterogeneous agent strength and rapid convergence to stable local dominance.

A widely studied abstraction of local dominance is the Maximal Independent Set (MIS) problem. In the classical MIS protocol of~\cite{luby1986}, each agent repeatedly generates a strength value chosen uniformly and becomes locally dominant if its value is smaller than those of its neighbors, and~\cite{luby1986} proves that the protocol converges quickly. Since all agents draw from the same uniform distribution, this provides a theoretical explanation for fast dominance convergence in populations of equal-strength agents. This naturally raises the question of whether fast convergence also holds in the more realistic setting where agents are inherently mixed-strength.

To investigate this question, we formalize the \emph{mixed-strength agents} model, in which each agent draws its strength from its own distribution, and these per-agent distributions may vary across the population. We prove that the extension of the~\cite{luby1986} protocol where each agent repeatedly generates a strength value from its own distribution still exhibits fast dominance convergence, providing formal confirmation of the rapid convergence observed in many mixed-strength natural processes.

We also show that heterogeneity can significantly change the dynamics of the process. In contrast to the equal-strength setting, a constant fraction of edges need not be eliminated per round. We construct a population and strength profile in which progress per round is asymptotically smaller, illustrating how inherent strength asymmetry produces qualitatively different global behavior.
\end{abstract}

\section{Introduction}

In many natural and engineered systems, agents interact through repeated local contests that determine which individuals become dominant within their neighborhoods. These interactions are shaped by inherent differences in strength, and they often lead to stable dominance patterns that emerge surprisingly quickly relative to the size of the population. This motivates the study of simple mathematical models that capture both heterogeneous agent strength and rapid convergence to stable local dominance.

A widely studied abstraction of local dominance is the Maximal Independent Set (MIS) problem. In this view, each agent competes with its neighbors, and an agent becomes locally dominant if it is chosen as a member of a MIS. The classical MIS protocol of~\cite{luby1986} provides a particularly simple model of this process. In that protocol, agents repeatedly generate random strength values, compare them to those of their neighbors, and join the MIS if their value is the smallest in their local neighborhood. All agents draw from the same uniform distribution, corresponding to a population of inherently equal-strength agents. The analysis in~\cite{luby1986} shows that the protocol eliminates a constant fraction of remaining edges on average in each round and thus converges in $O(\log(n))$ rounds with high probability. This gives a theoretical explanation for fast dominance convergence in symmetric populations.

Many real-world systems, however, consist of heterogeneous agents. Biological, social, and engineered populations often include individuals with differing levels of energy, capability, or environmental advantage. Local contests among such mixed-strength agents produce dominance patterns that stabilize quickly, even though the agents are not symmetric. This raises a natural question: does the classical MIS protocol retain fast convergence when agents have inherent differences in strength?

To address this question, we study the \emph{mixed-strength agents} model, in which each agent draws its strength from its own distribution over $[0,1]$. These per-agent distributions may vary widely across the population, capturing heterogeneous competitive ability while preserving the local structure and simplicity of the classical MIS protocol. In this model, an agent with a distribution skewed toward smaller values can be viewed as inherently stronger than its neighbors.

Our main result shows that the classical MIS protocol of~\cite{luby1986} continues to converge quickly in mixed-strength populations. Under mild technical conditions on the agent-specific cumulative distribution functions, we prove that the process terminates in
\[
O(\log(n) \cdot \log(d))
\]
rounds with high probability, where $n$ is the population size and $d$ is the maximum degree of the graph. This demonstrates that simple decentralized competition yields rapid dominance convergence even under heterogeneity, providing formal confirmation of a phenomenon widely observed in natural systems.

We also show that heterogeneity can significantly change the dynamics of the process. In contrast to the equal-strength setting, a constant fraction of edges need not be eliminated per round. In Section~\ref{sec:few edges}, we construct a graph and mixed-strength profile in which progress per round is asymptotically smaller, illustrating how asymmetric strength can produce qualitatively different global behavior.

The remainder of the paper is organized as follows. Section~\ref{sec:mis} reviews the classical MIS protocol in the equal-strength setting. Section~\ref{sec:related-work} discusses related work on distributed MIS selection and symmetry breaking. Section~\ref{sec:model} introduces the mixed-strength agents model and presents natural examples. Section~\ref{analysis} proves fast convergence in mixed-strength populations. Section~\ref{sec:few edges} gives an example where asymmetry slows local elimination. Section~\ref{sec:discussion} concludes with broader implications and open questions.

\section{The MIS protocol}
\label{sec:mis}

The classic MIS protocol of~\cite{luby1986} can be summarized as follows:

\begin{center}
\begin{minipage}{0.95\linewidth}
\begin{mdframed}[linewidth=0.5pt, roundcorner=4pt, innertopmargin=0.8em, innerbottommargin=0.8em, innerleftmargin=1em, innerrightmargin=1em, backgroundcolor=gray!5]

\textbf{The MIS Protocol}
\vspace{0.5em}

\noindent
All agents begin in the \textit{active} state.

\vspace{0.5em}
\noindent
The protocol proceeds in synchronous rounds. In each round:

\begin{itemize}
  \item Each active agent independently selects a random number uniformly from the interval $[0,1]$.
  \item An active agent that has the largest number among its active neighbors adds itself to the MIS and becomes \textit{inactive} (chosen as a leader).
  \item Any active agent adjacent to a newly added MIS member becomes \textit{inactive} (it is adjacent to a leader).
\end{itemize}

\end{mdframed}
\end{minipage}
\end{center}

This protocol assumes that all agents are equal in strength, meaning they execute the same logic, follow identical rules, and sample random values from the same uniform distribution. No agent has any structural advantage or distinguishing feature that influences its probability of selection.

Under these assumptions, it can be shown that in each round, a constant fraction of the edges between active agents is eliminated on average. As a result, the protocol terminates when no active agents remain with high probability in $O(\log n)$ rounds, where $n$ is the number of agents in the graph. This setting of uniform behavior and symmetry has been central to much of the theoretical work in distributed symmetry breaking~\cite{alon91, ghaffari16}, where agents are treated as interchangeable and coordination is driven entirely by probabilistic fairness.

Our main result shows that the classical MIS protocol of~\cite{luby1986} retains fast convergence even under agent asymmetry. Specifically, we prove in Theorem~\ref{thm:main} that the protocol completes with high probability in
\[
O(\log(n) \cdot \log(d))
\]
rounds, where $n$ is the number of agents and $d$ is the maximum degree in the graph, provided each agent samples its random number in each round from a cumulative distribution function (CDF) satisfying the Mixed-Strength Agents Conditions defined in Section~\ref{sec:model}.

\section{Related Work}
\label{sec:related-work}

The classical MIS protocol of~\cite{luby1986} assumes agents draw their values from a common uniform distribution. This symmetry has underpinned much of the theoretical work in distributed symmetry breaking~\cite{alon91, ghaffari16}, where agents are treated as interchangeable and randomized coordination is analyzed under homogeneous assumptions.

While the {\bf Beeping model}~\cite{alon2011} and its successors~\cite{EW13, EU20, EK21} explore extreme communication constraints (such as lack of unique identifiers, limited sensing, or one-bit messages), they still rely on globally uniform random choices by agents. These works emphasize simplicity and local information but do not model population-level heterogeneity in agent behavior.

A recent line of work on asynchronous and dynamic MIS protocols, such as~\cite{emek2023asynchronous}, begins to relax assumptions about network stability or timing by allowing for random delays or topological changes. However, even these protocols typically assume agents use the same distribution when generating their random decisions.

In contrast, our model departs from this tradition by allowing each agent to draw from an individual, potentially distinct distribution (capturing real-world asymmetries in capability, environment, or state). To the best of our knowledge, no prior formal analysis has studied MIS under such heterogeneous randomness. We show that even with this strong asymmetry, the classic protocol converges in polylogarithmic time. This opens the door to studying population heterogeneity in a range of distributed symmetry-breaking problems.

\section{Mixed-Strength Agents Model}
\label{sec:model}

We introduce the \textbf{mixed-strength agents} model to capture systems in which agents differ in their inherent ability to win local competition. Unlike the standard MIS setting, where all agents sample uniformly at random from $[0,1]$ and are thus symmetric in expectation, our model allows each agent to sample from its own individual distribution. This reflects asymmetries in capability that arise in both natural and engineered systems, such as differences in influence, resources, or connectivity.

In this setting, each agent may use a distinct probability distribution over $[0,1]$, making it more or less likely to generate lower values. Agents whose distributions place more weight on smaller values have a competitive advantage in the MIS selection process and are more likely to become local leaders. These asymmetries affect the dynamics of the protocol and allow us to model heterogeneous populations more realistically.

To formalize this, we associate each agent $i$ with a cumulative distribution function (CDF) $p_i[x]$ over $[0,1]$, where
\[
p_i[x] = \Pr\left[\text{number chosen by agent } i \text{ in a round} \le x\right].
\]
This CDF governs how agent $i$ samples its random number in each round. Strength is encoded in the shape of the distribution: agents with CDFs skewed toward smaller values are inherently stronger in local competition. By analyzing these per-agent CDFs, we can study how heterogeneity influences MIS formation and quantify the impact of strength asymmetries on the protocol’s behavior.

We now describe two natural examples that instantiate this model:

\begin{description}
    \item[Fair bits:] Each bit of the number is independently set to 0 or 1 with equal probability $1/2$. This yields a distribution that is uniform over $[0,1]$, and in this case, the CDF for agent $i$ is $p_i[x] = x$ for all $x$. This corresponds to the classical, symmetric setting.

    \item[Biased bits:] Fix constants $\epl$ and $\epu$ such that $0 < \epl \le \epu < 1$. Each agent $i$ is assigned a fixed bit bias parameter $q_i \in [\epl, \epu]$. To generate a number in $[0,1]$, each bit is chosen independently to be 0 with probability $q_i$ and 1 with probability $1 - q_i$. The resulting number is more likely to fall closer to 0 when $q_i > 1/2$, and closer to 1 when $q_i < 1/2$. The CDF for agent $i$ satisfies:
    \[
    p_i\left[\frac{1}{2^\ell}\right] = q_i^\ell
    \]
    for all nonnegative integers $\ell$.
\end{description}

In the biased bits model, the bit bias $q_i$ reflects the inherent strength of agent $i$: a larger $q_i$ corresponds to a stronger agent, more likely to generate a smaller number and win in local competition. To illustrate the effect of this asymmetry, consider two classes of agents that use bits with biases in the interval $\left[\frac{1}{4},\frac{1}{2}\right]$:

\begin{description}
    \item[Stronger agents:] $q_i = 1/2$.
    \item[Weaker agents:] $q_i = 1/4$.
\end{description}

Suppose agent $i$ has $d$ stronger neighbors. Then, in a round of the protocol:
\begin{itemize}
    \item If $i$ is a stronger agent, it is added to the MIS with probability approximately $\frac{1}{d}$.
    \item If $i$ is a weaker agent, it is added to the MIS with probability approximately $\frac{1}{d^2}$.
\end{itemize}

Suppose agent $i$ has $d$ weaker neighbors. Then, in a round of the protocol:
\begin{itemize}
    \item If $i$ is a stronger agent, it is added to the MIS with probability approximately $\frac{1}{\sqrt{d}}$.
    \item If $i$ is a weaker agent, it is added to the MIS with probability approximately $\frac{1}{d}$.
\end{itemize}

These examples demonstrate how an agent’s inherent strength, captured by its distribution bias, can dramatically influence its chance of becoming a local leader.

To analyze the protocol's behavior under heterogeneous CDFs, we introduce two technical conditions that constrain the family of allowable CDFs on the random numbers chosen by agents. These conditions are satisfied by the examples above, and by a broad class of other natural distributions.

\vspace{0.2cm}
\noindent
{\bf Mixed-Strength Agents Conditions:} Fix constants $\epl$ and $\epu$ such that $0 < \epl \le \epu < 1$. We require that for all agents $i$ and for all nonnegative integers $\ell$, the CDF $p_i[x]$ for agent $i$ satisfies:
\begin{align}
p_i\left[\frac{1}{2^{\ell+1}} \right] &\ge \epl \cdot p_i\left[\frac{1}{2^{\ell}} \right] \tag{L} \label{itm:L}
\\[0.8em]
p_i\left[\frac{1}{2^{\ell+1}} \right] &\le \epu \cdot p_i\left[\frac{1}{2^{\ell}} \right] \tag{U} \label{itm:U} 
\end{align}

These constraints ensure the CDFs are neither too flat nor too steep, enabling sufficient competition across neighborhoods of agents in a round.  

The Mixed-Strength Agents Conditions~\eqref{itm:L} and~\eqref{itm:U} are motivated by biased bits:
\begin{lemma}
The Mixed-Strength Agents Conditions~\eqref{itm:L} and~\eqref{itm:U} are satisfied by CDF $p_i[x]$ for agent $i$ that uses bit bias $q_i \in [\epl,\epu]$ to choose its random number in a round of the protocol.
\end{lemma}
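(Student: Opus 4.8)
The plan is to reduce both conditions directly to the constraint $q_i \in [\epl,\epu]$, using the closed form for the biased-bits CDF already recorded in the Biased bits example. First I would recall why that closed form holds: when agent $i$ generates its number bit-by-bit, each bit independently equal to $0$ with probability $q_i$, the number lies in $[0,2^{-\ell}]$ exactly when the first $\ell$ bits are all $0$ (the only boundary case, where the first $1$-bit occurs at position $\ell$ and is followed by all zeros, has probability zero and hence does not affect the CDF value at $x=2^{-\ell}$). Therefore $p_i\!\left[1/2^{\ell}\right] = q_i^{\ell}$ for every nonnegative integer $\ell$; if one prefers, this identity can simply be invoked as stated in the excerpt.

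Next I would substitute this expression into \eqref{itm:L} and \eqref{itm:U}. Condition \eqref{itm:L} becomes $q_i^{\ell+1} \ge \epl\, q_i^{\ell}$ and condition \eqref{itm:U} becomes $q_i^{\ell+1} \le \epu\, q_i^{\ell}$. Since $q_i \ge \epl > 0$ we have $q_i^{\ell} > 0$, so dividing both inequalities by $q_i^{\ell}$ is legitimate, and they reduce to $q_i \ge \epl$ and $q_i \le \epu$ respectively. Both hold by the hypothesis $q_i \in [\epl,\epu]$, and since $\ell$ was arbitrary, both conditions hold for all $\ell$ simultaneously.

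There is essentially no obstacle in this argument: the only two points worth a word of care are that dividing by $q_i^{\ell}$ is valid, which is precisely where the lower bound $\epl > 0$ is used, and that the CDF identity $p_i[1/2^\ell]=q_i^\ell$ is an equality rather than merely an approximation, which follows from the probability-zero boundary observation above. If a fully self-contained write-up is desired, that boundary observation is the only piece of bookkeeping to spell out; otherwise the lemma is immediate from the geometric-series description of the biased-bit expansion.
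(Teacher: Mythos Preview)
Your proposal is correct and follows essentially the same route as the paper: both substitute the closed form $p_i[1/2^{\ell}] = q_i^{\ell}$ and reduce the two conditions to $q_i \ge \epl$ and $q_i \le \epu$, with the only cosmetic difference being that the paper multiplies $q_i^{\ell}$ by $q_i$ and compares, while you divide by $q_i^{\ell}$.
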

\begin{proof}
    Since $q_i \ge \epl$, 
    \[ p_i\left[\frac{1}{2^{\ell+1}} \right] = q_i^{\ell+1}\ge \epl \cdot
     q_i^{\ell} = \epl \cdot p_i\left[\frac{1}{2^{\ell}} \right], \]
     and thus Condition~\eqref{itm:L} is satisfied.  Similarly, since
     $q_i \le \epu$, 
    \[ p_i\left[\frac{1}{2^{\ell+1}} \right] = q_i^{\ell+1}\le \epu \cdot
     q_i^{\ell} = \epu \cdot p_i\left[\frac{1}{2^{\ell}} \right], \]
     and thus Condition~\eqref{itm:U} is satisfied.
\end{proof}
Thus, agents that use fair bits or biased bits to choose their random number in each round of the protocol satisfy the Mixed
Strength Agents Conditions.  However, the Mixed-Strength Agents Conditions are satisfied
by a broader family of CDFs.  
As a simple example, each
bit that agent $i$ uses to choose its random number in a round can be chosen with a different bias.  

\section{Results and Analysis}
\label{analysis}

In this section, we analyze the performance of the MIS protocol under the mixed-strength agents model, assuming the agent-specific CDFs satisfy the technical conditions introduced in Section~\ref{sec:model}. 

We begin with a technical lemma that is used in the proof of
Lemma~\ref{lemma:main}.
\begin{lemma}
\label{lemma:upper}
Suppose the CDF $p_i[x]$ for agent $i$ 
satisfies the Mixed-Strength Agents Conditions.  Then,
\[ p_i\left[\frac{1}{2^{\ell}} \right]  \le 
2^{\ell \cdot \log\left({\epu}\right)} \] 
\end{lemma}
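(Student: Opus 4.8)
The plan is to iterate Condition~\eqref{itm:U} downward from $\ell$ to $0$. Applying~\eqref{itm:U} once gives $p_i[1/2^{\ell}] \le \epu \cdot p_i[1/2^{\ell-1}]$, and applying it $\ell$ times in succession yields $p_i[1/2^{\ell}] \le \epu^{\ell} \cdot p_i[1/2^{0}] = \epu^{\ell} \cdot p_i[1]$. Since $p_i$ is a CDF on $[0,1]$, we have $p_i[1] = 1$, so $p_i[1/2^{\ell}] \le \epu^{\ell}$.

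The remaining step is purely a rewriting of $\epu^{\ell}$ into the stated form $2^{\ell \cdot \log(\epu)}$. Writing $\epu = 2^{\log(\epu)}$ (with $\log$ base $2$, so that the exponent $\log(\epu)$ is negative since $\epu < 1$), we get $\epu^{\ell} = 2^{\ell \cdot \log(\epu)}$, which matches the claimed bound exactly. So the proof is a one-line induction on $\ell$ followed by an identity.

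I do not anticipate a genuine obstacle here; the only things to be careful about are (i) confirming that $p_i[1] = 1$ is legitimately available — it follows from $p_i$ being a cumulative distribution function for a random number in $[0,1]$ — and (ii) being explicit that the $\log$ in the statement is base $2$ so that the conversion $\epu = 2^{\log \epu}$ is correct and the exponent is negative (ensuring the bound is a decaying function of $\ell$, as one would expect). If one wanted to be maximally careful, the induction can be stated formally: the base case $\ell = 0$ is $p_i[1] \le 1$, and the inductive step multiplies the hypothesis by $\epu$ via~\eqref{itm:U}. Condition~\eqref{itm:L} is not needed for this lemma.
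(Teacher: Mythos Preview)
Your proof is correct and matches the paper's own argument essentially line for line: both proceed by induction on $\ell$, using $p_i[1]\le 1$ as the base case and Condition~\eqref{itm:U} for the inductive step, then rewrite $\epu^{\ell}$ as $2^{\ell\log(\epu)}$. Your added remarks about the base of the logarithm and the irrelevance of Condition~\eqref{itm:L} are accurate and do not diverge from the paper's approach.
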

\begin{proof}
    By induction.  It can be checked by inspection to be true
    for $\ell=0$.  Suppose it is true for $\ell$.  Then, 
    \[p_i\left[\frac{1}{2^{\ell+1}}\right] \le 
    \epu \cdot p_i\left[\frac{1}{2^{\ell}}\right] \le
    \epu \cdot 2^{\ell \cdot \log\left({\epu}\right)} 
    = 2^{\log({\epu})} \cdot 2^{\ell \cdot \log\left({\epu}\right)}  
    = 2^{(\ell+1) \cdot \log\left({\epu}\right)}, \] where the first inequality is from Mixed-Strength Agents Condition~\eqref{itm:U} and the second inequality is from the induction hypothesis.
\end{proof}

The main technical result of this paper is the following.

\begin{lemma}
[Round Complexity of MIS in the Mixed-Strength Agents Model]
\label{lemma:main}
Let $G = (V, E)$ be a graph with $n$ agents and maximum degree $d$. Suppose each agent $i \in V$ chooses its
random number in each round with respect to a CDF $p_i[x]$ that satisfies the Mixed-Strength Agents Conditions~\eqref{itm:L} and~\eqref{itm:U}. Then the MIS protocol described in \cite{luby1986} completes in at most
\begin{equation}
\label{eq 0}
 \frac{8\cdot \log(n) \cdot \log\left(2\cdot d/\epl \right)}{\epl \cdot \log\left({1/\epu}\right)}
\end{equation}
rounds on average.
\end{lemma}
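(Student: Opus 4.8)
The plan is a potential-function argument. Let $m_t$ denote the number of edges of $G$ both of whose endpoints are still active after round $t$, so $m_0\le n^2$ and $(m_t)$ is nonincreasing. The core estimate I would establish is that one round shrinks the potential by a fixed factor in expectation: from any configuration with $m$ active edges, the expected number of active edges remaining after that round is at most $(1-\gamma)\,m$, where
\[
\gamma\ :=\ \frac{\epl\cdot\log(1/\epu)}{c\cdot\log(2d/\epl)}
\]
for a suitable absolute constant $c$. Given this, $\mathbb{E}[m_t]\le n^2(1-\gamma)^t$, hence $\Pr[\text{protocol not finished after round }t]=\Pr[m_t\ge1]\le\min\{1,\,n^2(1-\gamma)^t\}$, and summing over $t$ (the first $O(\log(n)/\gamma)$ terms contribute at most $1$ each, the geometric tail contributes $O(1/\gamma)$) bounds the expected number of rounds by $O(\log(n)/\gamma)$; tracking constants gives the bound~\eqref{eq 0}. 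So the whole proof reduces to the one-round estimate.

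For the one-round estimate, condition on the start-of-round configuration and let $d_v$ be the current active degree of $v$. The set $S$ of vertices that join the MIS this round is independent (two adjacent vertices cannot both be the minimum of their own closed neighborhood), and $v$ is deactivated exactly when $v\in S$ or $N(v)\cap S\neq\emptyset$; counting edges incident to the deactivated set $D$ gives $\sum_{v\in D}d_v\le 2\cdot(\text{edges eliminated})$, so it suffices to exhibit a set $W$ of active vertices with $\sum_{v\in W}d_v\ge\Omega(\gamma)\,m_t$ each of which is deactivated with probability $\Omega(1)$. The key device is, for each active $v$ with $d_v\ge1$, its \emph{threshold level} $\ell_v$, the least $\ell$ with $\sum_{u\in N(v)}p_u[1/2^{\ell}]\le\epl/2$. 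Three facts follow. Since $\sum_{u\in N(v)}p_u[1]=d_v\ge1>\epl/2$, we have $\ell_v\ge1$; since $\sum_{u\in N(v)}p_u[1/2^{\ell}]\le d\cdot\epu^{\ell}$ by Lemma~\ref{lemma:upper}, we have $\ell_v\le L:=\lceil\log(2d/\epl)/\log(1/\epu)\rceil$. By minimality of $\ell_v$ and Condition~\eqref{itm:L}, $\tfrac{\epl}{2}<\sum_{u\in N(v)}p_u[1/2^{\ell_v-1}]$ yields $\sum_{u\in N(v)}p_u[1/2^{\ell_v}]>\epl^2/4$. And since $\sum_{u\in N(v)}p_u[1/2^{\ell_v}]\le\tfrac12$, the Weierstrass inequality gives $\prod_{u\in N(v)}(1-p_u[1/2^{\ell_v}])\ge\tfrac12$. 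From these, the probability that exactly one neighbor of $v$ samples a value at most $1/2^{\ell_v}$ is at least $\tfrac12\sum_{u\in N(v)}p_u[1/2^{\ell_v}]=\Omega(\epl^2)$; and whenever some neighbor $u$ of $v$ samples below $1/2^{\ell_u}$ while every active neighbor of $u$ samples strictly larger, $u$ joins the MIS and $v$ is deactivated.

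The main obstacle is precisely the gap between ``$v$ has a neighbor sampling a small value'' and ``$v$ has a neighbor that actually joins the MIS'': a low-sampling neighbor $u$ may fail to join because some vertex outside $N[v]$ — over whose distribution we have no neighborhood-level control — samples an even smaller value. This is the heterogeneous analogue of the familiar obstruction in analyzing parallel MIS, and it is where a $\log d$ factor is paid relative to the symmetric case. I would resolve it by adapting Luby's argument to the mixed-strength setting: bound $\Pr[u\text{ joins}]\ge\tfrac12\,p_u[1/2^{\ell_u}]$ via the isolated-low-proposer event for $u$; call $v$ \emph{good} if $\sum_{u\in N(v)}\Pr[u\text{ joins}]\ge\tfrac14$; show by a Bonferroni/second-moment estimate on the pairwise events $\{u\text{ joins}\}\cap\{u'\text{ joins}\}$ with $u,u'\in N(v)$ (on which the usual bound for pairs of simultaneous local minima applies) that every good vertex is deactivated with probability $\Omega(1)$; and show by a counting argument over the threshold levels $\ell_u\in\{1,\dots,L\}$ that the edges incident to good vertices carry at least a $\gamma$-fraction of $m_t$ — it is the spread of the $\ell_u$ across the $L=\Theta(\log(2d/\epl)/\log(1/\epu))$ possible levels, forced by heterogeneity of the $p_u$, that produces the $1/\log(2d/\epl)$ factor in $\gamma$. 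Taking $W$ to be the good vertices gives the one-round estimate, and combined with the potential argument above this proves the lemma.
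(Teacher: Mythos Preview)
Your route and the paper's diverge at the outset. You attempt a per-round edge-elimination estimate (a $\gamma$-fraction of active edges removed in expectation each round). The paper does not do this; it tracks the integer potential $\ell_{\max}=\max_i\ell_i$ and shows it drops by at least one every $8\log(n)/\epl$ rounds on average, then multiplies by the initial bound $\ell_{\max}\le\log(2d/\epl)/\log(1/\epu)$. The crucial observation you are missing is that at the scale $x_{\max}=1/2^{\ell_{\max}}$ one has $t_j[x_{\max}]<1/2$ for \emph{every} active vertex $j$ simultaneously (since $\ell_j\le\ell_{\max}$ for all $j$, together with Condition~\eqref{itm:L}). Hence $\prod_{k\in N(j)}(1-p_k[x_{\max}])>1/2$ holds globally, and for any $i$ at the top level and any $u\in N(i)$ the ``outside'' neighborhood $N(u)\setminus N(i)$ is already controlled. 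The disjoint events ``$u$ alone in $N(u)\cup N(i)$ samples below $x_{\max}$'' then give $\Pr[i\text{ eliminated}]\ge\epl/8$ directly---no good-vertex or second-moment machinery is needed. The $\log(d)$ loss enters not through a per-round edge fraction but through the number of level decrements.

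Your sketch has a genuine gap at exactly the point where the paper's global-level trick would have saved you. The ``counting argument over the threshold levels $\ell_u\in\{1,\dots,L\}$'' is asserted but not supplied, and it is the entire content of the one-round estimate. Your good-vertex criterion $\sum_{u\in N(v)}\Pr[u\text{ joins}]\ge 1/4$ lower-bounds each summand by $\tfrac12\,p_u[1/2^{\ell_u}]$ at $u$'s \emph{own} level; across $N(v)$ these levels can differ arbitrarily, and there is no mechanism offered that forces a $1/L$-fraction of edge-endpoints to be good. Concretely, a vertex $v$ all of whose neighbors $u$ have $\ell_u\gg\ell_v$ (high-degree neighbors, each with small $p_u[1/2^{\ell_u}]$) will be bad, and nothing in your outline prevents such vertices from carrying almost all edges. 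Separately, the Bonferroni step is not free here: for non-adjacent $u,u'\in N(v)$ the events $\{u\text{ joins}\}$, $\{u'\text{ joins}\}$ depend on overlapping sample sets, and the ``usual bound for pairs of simultaneous local minima'' relies on the symmetry of the uniform distribution, which is precisely what the mixed-strength model abandons. Until both steps are made precise, the one-round estimate---and hence the lemma---remains unproven along your route.
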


\begin{proof}
Let $N(i)$ denote the neighborhood of agent $i$, including $i$ itself, at the start of some round. Define the total CDF over the neighborhood of agent $i$ as
\[
t_i[x] = \sum_{j \in N(i)} p_j[x].
\]

Let $\ell_i$, hereafter called the {\bf level of agent $i$}, be the largest integer $\ell$ for which the total CDF over the neighborhood of agent $i$ up through $1/2^\ell$ at the start of the round exceeds a threshold:
\begin{equation}
\label{ge eq}
\ell_i = \max \left\{ \ell : t_i\left[ \frac{1}{2^\ell} \right] \ge \frac{\epl}{2}\right\}.
\end{equation}

Define:
\begin{itemize}
    \item $\mathcal{A}_\ell = \{ i : \ell_i = \ell \}$, the set of agents at level $\ell$,
    \item $\ell_{\max} = \max_i \ell_i$, the highest level among all agents,
    \item $x_{\max} = \frac{1}{2^{\ell_{\max}}}$.
\end{itemize}

We first show that
\begin{equation}
\label{eq 2}
\ell_{\max} \le \frac{\log\left(2 \cdot d/\epl\right)}{\log\left({1/\epu}\right)}.
\end{equation}
To see this, fix any agent $i \in \mathcal{A}_{\ell_{\max}}$. By Lemma~\eqref{lemma:upper}, for each $j \in N(i)$ we have 
\[ p_j[x_{\max}] \le x_{\max}^{\log\left({1/\epu}\right)} \] so
\[
t_i[x_{\max}] \le d \cdot \left( \frac{1}{2^{\ell_{\max}}} \right)^{\log({1/\epu})}.
\]
But by Equation~\eqref{ge eq}, \( t_i[x_{\max}] \ge \frac{\epl}{2} \). Rearranging gives Equation~\eqref{eq 2}.

Next, we show that \( \ell_{\max} \) decreases by at least one every \( O(\log(n)) \) rounds on average.
Fix a round. By the definition of \( \ell_{\max} \), for all agents \( i \), we have:
\[
t_i\left[ \frac{1}{2^{\ell_{\max} + 1}} \right] < \frac{\epl}{2}.
\]
Using Mixed-Strength Agents Condition~\eqref{itm:L}, this implies:
\begin{equation}
\label{eq 2.5}
t_i[x_{\max}] < \frac{1}{2}.
\end{equation}

Fix an agent \( i \in \mathcal{A}_{\ell_{\max}} \). We claim that agent $i$ is eliminated with probability at least \( \epl/8 \) in this round. Specifically, agent $i$ is eliminated if some neighbor \( j \in N(i) \) chooses the minimum value in \( N(j) \cup N(i) \). The probability of this is at least:
\begin{equation}
\label{eq 3}
\sum_{j \in N(i)} p_j[x_{\max}] \cdot \prod_{k \in N(j) \cup N(i) \setminus \{j\}} (1 - p_k[x_{\max}]).
\end{equation}

By Equation~\eqref{eq 2.5}, for each \( j \in V \),
\begin{equation}
\label{eq:aa}
\sum_{k \in N(j)} p_k[x_{\max}] < \frac{1}{2}
\end{equation}
Since for any sequence of positive numbers $a_0,a_1,\ldots,a_n$
such that $\sum_{i=0}^{n} a_i < 1/2$ it is the case that
\[ \sum_{i=0}^{n} (1- a_i) \ge 
1- \sum_{i=0}^{n} a_i > \frac{1}{2}.\]
This and Equation~\eqref{eq:aa} implies that, for each \( j \in V \),
\[
\prod_{k \in N(j)} (1 - p_k[x_{\max}]) > \frac{1}{2}.
\]
From this it follows that for any $j \in N(i)$, 
\begin{equation}
\label{eq 4}
\prod_{k \in N(j) \cup N(i)  \setminus \{j\}} (1 - p_k[x_{\max}]) \ge \prod_{k \in N(j)} (1 - p_k[x_{\max}])
\cdot \prod_{k \in N(i)} (1 - p_k[x_{\max}]) > \frac{1}{4}.
\end{equation}
Since \( i \in \mathcal{A}_{\ell_{\max}} \), i.e., 
\( \ell_i = \ell_{\max}\), from Equation~\eqref{ge eq} it follows that: 
\begin{equation}
\label{eq:3p}
  \sum_{j \in N(i)} p_j[x_{\max}] = t_i[x_{\max}] \ge \frac{\epl}{2}  
\end{equation} 
Combining Equations~\eqref{eq 3}, \eqref{eq:3p}, and \eqref{eq 4}, we conclude that the probability agent \( i \) is eliminated is at least:
\[
\frac{\epl}{2} \cdot \frac{1}{4} = \frac{\epl}{8}.
\]

Thus, since there are at most $n$ agents in 
\( \mathcal{A}_{\ell_{\max}} \), all such agents are 
eliminated in expectation after at most 
\[ \frac{8 \cdot \log(n)}{\epl} \] 
rounds and thus 
$\ell_{\max}$ decreases at least one on average each \( \frac{8 \cdot \log(n)}{\epl}  \) rounds. 
Since \( \ell_{\max} \) starts at most at
\[\frac{\log\left(2 \cdot d/\epl\right)}{\log({1/\epu})}  \] 
the total expected number of rounds is bounded by Equation~\eqref{eq 0}, as claimed.
\end{proof}
As a concrete example, if $\epl = 1/4$ and $\epu = 1/2$ 
then the bound in Lemma~\ref{lemma:main} simplifies to
\begin{equation}
32\cdot \log(n) \cdot \log\left(8\cdot d \right).
\end{equation}

Corollary~\ref{cor:main} frames Lemma~\ref{lemma:main} in the context of the gap between $0$ and 
$\epl$ and the gap
between $\epu$ and $1$. 
\begin{corollary}
\label{cor:main}
Let $G = (V, E)$ be a graph with $n$ agents and maximum degree $d$. Suppose each agent $i \in V$ chooses its
random number in each round with respect to a CDF $p_i[x]$ that satisfies the Mixed-Strength Agents Conditions~\eqref{itm:L} and~\eqref{itm:U}. Then the MIS protocol described in \cite{luby1986} completes in at most
\begin{equation}
O\left(\frac{\log(n) \cdot \log(d)}{\epl \cdot \epub}\right)
\end{equation}
rounds on average, where $\epub = 1 - \epu$.
\end{corollary}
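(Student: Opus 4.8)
The plan is to derive Corollary~\ref{cor:main} directly from Lemma~\ref{lemma:main} by massaging the explicit bound in Equation~\eqref{eq 0}, namely
\[
\frac{8\cdot \log(n) \cdot \log\left(2\cdot d/\epl \right)}{\epl \cdot \log\left({1/\epu}\right)},
\]
into the asymptotic form $O\!\left(\frac{\log(n)\cdot\log(d)}{\epl\cdot\epub}\right)$ with $\epub = 1-\epu$. There are two factors to handle: the numerator term $\log(2d/\epl)$, and the denominator term $\log(1/\epu)$.

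For the numerator, first I would write $\log(2d/\epl) = \log(2) + \log(d) + \log(1/\epl)$. Since $0 < \epl \le \epu < 1$ are fixed constants, $\log(1/\epl)$ is a constant, and for $d \ge 2$ we have $\log(d) \ge 1$, so $\log(2d/\epl) = O(\log d)$ with the hidden constant depending only on $\epl$; more carefully, one can absorb the $\log(1/\epl)$ into the existing $1/\epl$ factor in the denominator (since $\log(1/\epl) \le C/\epl$ for an absolute constant $C$ on $(0,1)$, using $\log(1/x) \le 1/(ex) \le 1/x$), so the whole numerator contributes $O(\log n \cdot \log d)$ times a factor already accounted for by $1/\epl$. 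The small-$d$ case (e.g.\ $d \le 1$, a graph with no edges) is trivial since the protocol then terminates in one round, so we may assume $d \ge 2$.

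For the denominator, the key step is the elementary inequality $\log(1/\epu) = -\log(\epu) = -\log(1-\epub) \ge \epub$, which holds because $-\log(1-x) \ge x$ for all $x \in [0,1)$ (equivalently $1-x \le e^{-x}$). Hence $\frac{1}{\log(1/\epu)} \le \frac{1}{\epub}$, so replacing $\log(1/\epu)$ by $\epub$ in the denominator only weakens the bound. Combining this with the numerator estimate yields
\[
\frac{8\cdot \log(n) \cdot \log\left(2\cdot d/\epl \right)}{\epl \cdot \log\left({1/\epu}\right)} \le \frac{8\cdot \log(n)\cdot\bigl(\log(2) + \log(d) + \log(1/\epl)\bigr)}{\epl\cdot\epub} = O\!\left(\frac{\log(n)\cdot\log(d)}{\epl\cdot\epub}\right),
\]
as claimed. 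Since Lemma~\ref{lemma:main} bounds the expected number of rounds by Equation~\eqref{eq 0}, the same bound (up to the constant absorbed into $O(\cdot)$) bounds the expected number of rounds here, completing the proof.

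I do not anticipate a genuine obstacle here; this is a bookkeeping argument. The only mild subtlety is being careful that the ``constants'' $\epl,\epu$ are treated as fixed so that $\log(1/\epl)$ can be hidden — but since those quantities already appear explicitly in the target bound's denominator, one can make the reduction fully uniform rather than treating them as constants, which is cleaner. The inequality $-\log(1-x)\ge x$ is the one load-bearing estimate and is standard.
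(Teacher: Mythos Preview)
Your proposal is correct and follows the same route as the paper: derive the corollary from Lemma~\ref{lemma:main} by replacing $1/\log(1/\epu)$ with $1/\epub$. Your use of the inequality $-\log(1-x)\ge x$ is in fact cleaner than the paper's one-line proof, which merely observes the asymptotic $1/\log(1/\epu)\to 1/\epub$ as $\epu\to 1$ without addressing the numerator or giving a bound valid for all $\epu$.
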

\begin{proof}
    As $\epu$ approaches 1, the term $\frac{1}{\log(1/\epu)}$ approaches $\frac{1}{1-\epu} = 1/\epub$.
\end{proof}

The main result in this paper, Theorem~\ref{thm:main}, follows from Lemma~\ref{lemma:main}. 
\begin{theorem}
\label{thm:main}
    If each agent’s CDF satisfies the Mixed-Strength Agents Conditions
    then the MIS protocol from \cite{luby1986} completes in
\[
O(\log(n) \cdot \log(d))
\]
rounds with high probability, where $n$ is the number of agents and $d$ is the maximum degree in the graph. The hidden constant depends on the constants $\epl$ and $\epu$ of the Mixed-Strength Agents Conditions.
\end{theorem}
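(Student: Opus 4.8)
The plan is to upgrade the expectation bound of Lemma~\ref{lemma:main} to a high-probability bound by noticing that its proof really establishes a \emph{per-round} elimination guarantee that is robust to how the graph evolves, and then replacing the averaging step at the end of that proof with a geometric tail bound followed by two union bounds. I would \emph{not} go through Markov's inequality plus restarting: that works but costs an extra logarithmic factor, yielding only $O(\log^2 n \cdot \log d)$; the point of a phase-by-phase argument is to preserve the bound of Lemma~\ref{lemma:main}.

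First I would record two monotonicity facts. Since the CDFs $p_j[\cdot]$ never change across rounds and agents only ever leave the active set, for any agent $i$ that is still active the quantity $t_i[x]=\sum_{j\in N(i)}p_j[x]$ is non-increasing from round to round; hence its level $\ell_i$ from Equation~\eqref{ge eq} is non-increasing, and since $t_i[1]=|N(i)|\ge 1>\epl/2$ we have $\ell_i\ge 0$ as long as $i$ is active. Consequently $\ell_{\max}$ is non-increasing and stays $\ge 0$ until the protocol terminates. Second, I would re-read the proof of Lemma~\ref{lemma:main} and observe that the estimate ``an agent $i$ with $\ell_i=\ell_{\max}$ is eliminated this round with probability at least $\epl/8$'' rests only on (a) $t_i[x_{\max}]\ge \epl/2$, which is exactly $\ell_i=\ell_{\max}$, and (b) $t_j[x_{\max}]<1/2$ for all active $j$, which follows from $\ell_{\max}$ being the top level; both are preserved under vertex deletion because deleting vertices only decreases every $t_j[\cdot]$. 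So this estimate holds conditioned on the current configuration, for \emph{every} configuration in which the top level still has its current value.

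Next I would partition the execution into phases, a phase being a maximal block of consecutive rounds during which $\ell_{\max}$ holds a fixed value $L_0$. By Equation~\eqref{eq 2}, $\ell_{\max}$ starts at most at $L:=\log(2d/\epl)/\log(1/\epu)=O(\log d)$, and since it is non-increasing and integer-valued there are at most $L+1$ nonempty phases. Fix a phase with value $L_0$ and a fixed agent $i$, and let $A_t$ be the event that $i$ is active with $\ell_i=L_0$ at the start of the $t$-th round of the phase. The two monotonicity facts give $A_{t+1}\subseteq A_t$, and on $A_t$, conditioned on the configuration at the start of round $t$, agent $i$ is eliminated that round with probability at least $\epl/8$; hence $\Pr[A_{t+1}]\le(1-\epl/8)\Pr[A_t]$ and so $\Pr[A_t]\le(1-\epl/8)^{t-1}\le e^{-\epl(t-1)/8}$. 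Choosing $T=1+\lceil(8/\epl)(c+2)\ln n\rceil=O(\log n)$ makes $\Pr[A_T]\le n^{-(c+2)}$ for any desired constant $c$; a union bound over the at most $n$ agents shows that with probability at least $1-n^{-(c+1)}$ no agent is at level $L_0$ at the start of round $T$ of the phase, and since such an agent is exactly what keeps $\ell_{\max}=L_0$, the phase then has length less than $T$. A final union bound over the $O(\log d)$ phases gives, with probability at least $1-n^{-c}$, that every phase lasts fewer than $T$ rounds, so $\ell_{\max}$ drops below $0$ — i.e.\ no active agents remain and the protocol has terminated — within $(L+1)T=O(\log d)\cdot O(\log n)=O(\log n\cdot\log d)$ rounds, with the hidden constant depending only on $\epl$ and $\epu$.

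The main obstacle is exactly the chaining of the per-round elimination probability across a phase whose graph keeps shrinking: one must verify that the two hypotheses behind the $\epl/8$ estimate survive vertex deletion, and that conditioning on ``$i$ still at the top level after $t$ rounds'' is compatible with invoking that estimate at each step. Both are handled by the monotonicity observations in the second paragraph ($t_j[\cdot]$ only decreases, and $\ell_i$ is non-increasing for surviving agents, which forces $A_{t+1}\subseteq A_t$). Everything after that is the routine geometric-tail-plus-union-bound computation sketched above, and the choice of the constant $c$ is what turns ``constant probability'' into ``with high probability'' in the sense of the theorem statement.
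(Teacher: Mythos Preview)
Your proposal is correct and follows the same underlying approach as the paper, namely using the per-round elimination guarantee from the proof of Lemma~\ref{lemma:main} together with the phase structure induced by the decreasing $\ell_{\max}$. The paper itself gives no details beyond asserting that Theorem~\ref{thm:main} ``follows from Lemma~\ref{lemma:main}''; your geometric-tail-plus-union-bound argument is exactly the natural way to flesh this out, and your explicit check that the two hypotheses behind the $\epl/8$ bound survive vertex deletion (and that $A_{t+1}\subseteq A_t$ via monotonicity of $\ell_i$) fills a gap the paper leaves implicit.
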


\section{Example Graph Where Few Edges Are Eliminated}
\label{sec:few edges}

In the classical MIS protocol of~\cite{luby1986}, each agent independently generates a random number by tossing fair coins, and joins the MIS if its number is strictly smaller than those of all its neighbors. The analysis shows that, in expectation, a constant fraction of the edges is eliminated in each round.

In contrast, we show that when agents use \emph{biased} coins, there are graphs where only a \emph{small} fraction of edges are eliminated per round on average. Our construction shows how a hierarchy of coin biases can significantly slow down the progress of the MIS protocol.

\vspace{0.5em}
\noindent
\textbf{Graph construction.}
The graph consists of \( \sqrt{\log(n)} + 1 \) cliques, each with \( n \) agents. These cliques are arranged in a line, with a complete bipartite graph connecting each pair of consecutive cliques.

Each agent generates a random number by flipping biased coins. The bit bias varies across cliques:
\begin{itemize}
    \item Agents in the first clique use fair bits, i.e., $q_i = 1/2$.
    \item Agents in the $j^{\rm th}$ clique use bit bias
    \[
    q_i = \frac{1}{2} - \frac{j-1}{4 \cdot \sqrt{\log(n)}},
    \]
    so agents become gradually weaker from left to right. In the final clique, $q_i = 1/4$.
\end{itemize}

\vspace{0.5em}
\noindent
\textbf{Effect on MIS selection.}
Because agents in earlier cliques are more strongly biased toward 0, they are more likely to generate the smallest numbers. For any pair of consecutive cliques, the agent with the lowest number is very likely to come from the earlier clique, making it unlikely that any agent in the later clique joins the MIS.

In the first round, with high probability, a single agent from the first clique joins the MIS. This agent is adjacent to all agents in both the first and second cliques, so both cliques become inactive. In the second round, the same pattern repeats with the third and fourth cliques, and so on.

This continues for about \( \frac{\sqrt{\log(n)}+1}{2} \) rounds. Since each round eliminates only two cliques out of \( \sqrt{\log(n)}+1 \), the fraction of edges removed per round is at most
\[
\frac{2}{\sqrt{\log(n)} + 1}.
\]
This is captured in the following lemma.

\begin{lemma}
\label{lem:few-edges}
There is a graph and choice of coin biases under the asymmetric model that satisfy the Mixed-Strength Agents Conditions such that in each round of the MIS protocol, only a 
\[
O\left( \frac{1}{\sqrt{\log(n)}} \right)
\]
fraction of the total edges are eliminated on average.
\end{lemma}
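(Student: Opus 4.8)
The plan is to analyze the explicit construction described above: the $k+1 = \sqrt{\log n}+1$ cliques $C_1,\dots,C_{k+1}$ of $n$ agents each, arranged in a path and joined by complete bipartite graphs between consecutive cliques, where every agent of $C_j$ uses biased bits with bias $q_j = \tfrac12 - \tfrac{j-1}{4\sqrt{\log n}}$, so that $q_1 = \tfrac12 > q_2 > \cdots > q_{k+1} = \tfrac14$. Two preliminaries are immediate. First, since every $q_j \in [\tfrac14,\tfrac12]$, the biased-bits lemma proved above shows each CDF satisfies the Mixed-Strength Agents Conditions with $\epl = \tfrac14$ and $\epu = \tfrac12$, so the construction is legal. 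Second, counting edges: there are $k+1$ cliques contributing $\binom n2$ each and $k$ bipartite layers contributing $n^2$ each, so $|E| = \Theta(k n^2) = \Theta(n^2\sqrt{\log n})$.

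The technical core is a separation estimate for the per-round clique minima. In a given round let $m_j$ be the smallest value drawn by an agent of $C_j$; since $\Pr[\text{a fixed agent of }C_j\text{ draws}\le 2^{-\ell}] = q_j^\ell = 2^{-\alpha_j\ell}$ with $\alpha_j := \log(1/q_j) \in [1,2]$, we have $\Pr[m_j \le 2^{-\ell}] \le n\,q_j^\ell$ and $\Pr[m_j > 2^{-\ell}] = (1-q_j^\ell)^n \le e^{-n q_j^\ell}$. The key point is that $q_{j-1}/q_j = 1 + \Theta(1/\sqrt{\log n})$ uniformly in $j$, hence $\alpha_j - \alpha_{j-1} = \Theta(1/\sqrt{\log n})$ and the interval $\big(\tfrac{\log n}{\alpha_j}, \tfrac{\log n}{\alpha_{j-1}}\big)$ is nonempty with length $\Theta(\sqrt{\log n})$. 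Picking an integer $\ell^\ast$ in this interval — for instance near the midpoint $\tfrac12\big(\tfrac1{\alpha_j} + \tfrac1{\alpha_{j-1}}\big)\log n$ — I would bound, for each consecutive pair,
\[
\Pr[m_j \le m_{j-1}] \;\le\; \Pr[m_j \le 2^{-\ell^\ast}] + \Pr[m_{j-1} > 2^{-\ell^\ast}] \;\le\; n\,q_j^{\ell^\ast} + e^{-n q_{j-1}^{\ell^\ast}},
\]
and a short calculation gives $n\,q_j^{\ell^\ast} = n^{-\Theta(1/\sqrt{\log n})} = 2^{-\Theta(\sqrt{\log n})}$ and $n\,q_{j-1}^{\ell^\ast} = n^{\Theta(1/\sqrt{\log n})} = 2^{\Theta(\sqrt{\log n})}$, so $\Pr[m_j \le m_{j-1}] \le 2^{-\Theta(\sqrt{\log n})}$. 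Let $\mathcal G$ be the event that $m_1 < m_2 < \cdots < m_{k+1}$ in each of the first $R := \lceil (k+1)/2\rceil$ rounds; since the biased-bits distributions are atomless, ties occur with probability $0$, and a union bound over $O(\sqrt{\log n})$ rounds and $O(\sqrt{\log n})$ pairs gives $\Pr[\neg\mathcal G] \le O(\log n)\cdot 2^{-\Theta(\sqrt{\log n})} = o(1/\sqrt{\log n})$.

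Next I would trace the dynamics conditioned on $\mathcal G$ by induction on the round $t$: if $C_1,\dots,C_{2t-2}$ are already inactive, then the ordering $m_{2t-1} < m_{2t} < \cdots$ forces the a.s.\ unique minimizer of $C_{2t-1}$ to beat all of its active neighbors, which lie only in $C_{2t-1}\cup C_{2t}$, so it joins the MIS; meanwhile every agent of a clique $C_j$ with $j\ge 2t$ is adjacent to the minimizer of $C_{j-1}$, an active neighbor of strictly smaller value, and so does not join. Hence round $t$ turns exactly $C_{2t-1}$ and $C_{2t}$ inactive, the process terminates after $R = \Theta(\sqrt{\log n})$ rounds, and the edges removed in round $t$ — the internal edges of $C_{2t-1}$ and $C_{2t}$ together with the at most two bipartite layers incident to them — number at most $2\binom n2 + 2n^2 = O(n^2)$. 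Finally, to convert to an expectation, note that at most $|E| = \Theta(n^2\sqrt{\log n})$ edges are ever removed, so in every round the expected number of edges eliminated is at most $O(n^2) + \Pr[\neg\mathcal G]\cdot\Theta(n^2\sqrt{\log n}) = O(n^2)$, which is an $O(1/\sqrt{\log n})$ fraction of $|E|$, as claimed.

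I expect the main obstacle to be the separation estimate in combination with the union bound. The $\Theta(1/\sqrt{\log n})$ spacing of the biases yields only a $2^{-\Theta(\sqrt{\log n})}$ failure probability per clique comparison — nothing polynomially small in $n$ — so one must check that only $O(\log n)$ such comparisons need to be controlled (namely $O(\sqrt{\log n})$ rounds times $O(\sqrt{\log n})$ consecutive pairs). This is exactly why the construction is calibrated to $\sqrt{\log n}$ cliques: enough cliques that each round makes only a $\Theta(1/\sqrt{\log n})$ fraction of progress, yet few enough that the union bound still closes.
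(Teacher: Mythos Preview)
Your proposal is correct and follows essentially the same approach as the paper's (sketch) proof: the identical clique-path construction, a threshold argument showing that the minimum of clique $j-1$ beats the minimum of clique $j$ except with probability $2^{-\Theta(\sqrt{\log n})}$, a union bound over $O(\sqrt{\log n})$ rounds and $O(\sqrt{\log n})$ consecutive pairs, and then tracing the dynamics to see that exactly two cliques are removed per round. Your version is in fact more careful than the paper's sketch---you choose a pair-dependent threshold $\ell^\ast$ rather than the fixed $\ell=\log n$ the paper uses for the first pair, and you make the conversion to an expected-fraction statement explicit via the $\Pr[\neg\mathcal G]\cdot|E|$ term---but the underlying argument is the same.
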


\begin{proof} \emph{(Sketch)}
Let $\ell = \log(n)$. Consider the first round. With constant probability, some agent in the first clique generates a number whose first $\ell$ bits are all 0. In the second clique, where each bit is 0 with probability \( \frac{1}{2} - \frac{1}{4 \cdot \sqrt{\log(n)}} \), the chance that \emph{any} agent produces a number with $\ell$ leading zeros is at most
\[
n \cdot \left( \frac{1}{2} - \frac{1}{4 \cdot \sqrt{\log(n)}} \right)^\ell = 2^{-\Omega(\sqrt{\log(n)})}.
\]
From this it can be shown that with high probability an agent from the first clique is added to the MIS and no agent from the second clique is added to the MIS, and thus both cliques become inactive.

A similar argument applies to all other pairs of consecutive cliques. In each such pair, the earliest clique likely contains some agent whose value is smaller than that of every agent in the next clique. As a result, all cliques from the third onward remain active at the end of the first round.

The same argument holds for the second and subsequent rounds: one node from the first remaining clique is added to the MIS, the first two remaining cliques become inactive, and all nodes in the remaining cliques stay active.

Since each round eliminates only two cliques, and there are \( \Theta(\sqrt{\log(n)}) \) cliques overall, the number of rounds is \( \Theta(\sqrt{\log(n)}) \), and the average fraction of edges removed per round is \( O(1/\sqrt{\log(n)}) \).
\end{proof}

\vspace{0.5em}
\noindent
\textbf{Sharper bound.}
With a more carefully chosen sequence of biases, this bound can be improved to
\[
O \left( \frac{1}{\log(n)} \right)
\]
edges eliminated per round on average.

\section{Discussion and Open Questions}
\label{sec:discussion}

This work introduces and analyzes the mixed-strength (asymmetric) agents model for distributed MIS selection, capturing localized competition among agents of varying capabilities. We showed that the classical MIS protocol from \cite{luby1986} continues to function under this broader model, completing in $O(\log(n) \cdot \log(d))$ rounds with high probability, even when agents draw their strength values from different distributions. We also demonstrated that, unlike in the symmetric case, the presence of asymmetric CDFs can lead to significantly slower progress in certain graphs, with only a 
$O\left(\frac{1}{\sqrt{\log (n)}}\right)$ fraction of edges eliminated in each round.

One open question is whether the bounds in Lemma~\ref{lemma:main} and Theorem~\ref{thm:main} are asymptotically tight, or whether it is possible to prove a tighter $O(\log(n))$-round bound for the MIS protocol in the mixed-strength agents model.

While the model captures heterogeneity in agent strength, it is fundamentally static: once agents generate a maximal independent set, the system stabilizes and no further reconfiguration occurs. Furthermore, each agent's CDF remains fixed over time. However, in many natural and engineered settings, agent capabilities evolve. Agents may weaken or strengthen, enter or leave the system, or adapt their behavior in response to external conditions. This raises the following question:

\begin{quote}
\emph{Can the MIS protocol be extended to respond dynamically to changes in agent strength, presence, or connectivity, and if so, how efficiently?}
\end{quote}

This question lies at the intersection of local adaptability and global consistency, and may require new algorithmic primitives beyond the classical MIS framework. Preliminary work on dynamic MIS addresses evolving graphs and population turnover. Extending the mixed-strength framework to support such dynamic environments remains an open and intriguing direction for future research.

A further observation arises from the construction in Section~\ref{sec:few edges}. There, MIS selection progresses sequentially through the cliques, even though the process is randomized. This emergent directionality is a consequence of the structured asymmetry in agent CDFs across the graph. It suggests the possibility that such asymmetries could be leveraged to support simple forms of local coordination or distributed decision-making. Formalizing and understanding these possibilities is another avenue for future exploration.

\bibliography{main}

\end{document}